\newcommand{\R}{\mathbb{R}}
\newcommand{\eps}{\varepsilon}
\newcommand{\etal}{et al.\xspace}
\DeclareMathOperator{\DT}{DT}
\DeclareMathOperator{\VD}{VD}
\DeclareMathOperator{\EMST}{EMST}
\newtheorem{theorem}{Theorem}[section]
\newtheorem{lemma}[theorem]{Lemma}
\newtheorem{problem}{Open Problem}
\newcommand{\abbrev}[2]{\expandafter\newcommand\csname #1\endcsname{#2\xspace}}
\newcommand{\cclasss}[2]{\abbrev{#1}{\textsf{#2}}}
\title{Geometric Algorithms with Limited Workspace: A Survey\thanks{BB and WM
supported in part by DFG grant MU 3501/2. MK supported in part by  
MEXT KAKENHI No.~17K12635. WM supported in part
by DFG grant MU 3501/1 and ERC StG 757609.}}
\author{
Bahareh Banyassady\thanks{%
  Institut f\"ur Informatik, Freie Universit\"at Berlin, Germany.
  Email: \texttt{[bahareh,mulzer]@inf.fu-berlin.de}} 
\and
Matias Korman\thanks{%
  Tohoku University, Sendai, Japan.
  Email: \texttt{mati@dais.is.tohoku.ac.jp}}
\and 
Wolfgang Mulzer\footnotemark[2]
}
\begin{document}

\maketitle

\begin{abstract}
In the \emph{limited workspace} model, we consider algorithms whose
input resides in read-only memory and that use only a constant or 
sublinear amount of writable memory to accomplish their task. 
We survey recent results in computational geometry that 
fall into this model and that
strive to achieve the lowest possible running time. In addition to 
discussing the state of the art, we give some illustrative
examples and mention open problems for further research.
\end{abstract}

\section{Introduction}

Space usage has been a concern since the very early days of algorithm 
design. The increased availability of devices with limited memory or 
power supply---such as smartphones, drones, or small sensors---as well as 
the proliferation of new storage media for which write access
is comparatively slow and may have negative effects on the lifetime---such
as flash drives---has led to renewed interest in the subject.
As a result, the design of algorithms for the 
\emph{limited workspace} model
has seen a significant rise in popularity in computational
geometry over the last decade. In this setting, we 
typically have a large amount of data that needs 
to be processed. Although we may access the data in any way and as often 
as we like, write-access to the main storage is limited and/or
slow. Thus, we opt to use only higher level memory for 
intermediate data (e.g., CPU registers).
Since application areas of the devices mentioned above -- sensors,
smartphones, and drones -- often handle a large amount of geographic (i.e.,
geometric) data, the scenario becomes particularly interesting from the
viewpoint of computational geometry.

Motivated by these considerations, there have been numerous recent
works developing algorithms for geometric problems that, in 
addition to the input, use only a small amount of memory. 
Furthermore, there has been research on \emph{time-space trade-offs}, 
where
the goal is to find the fastest algorithm for a given space budget.
In the following, we give a broad overview of these results and of the
current state of the field. We also provide examples to
illustrate the main challenges that these algorithms must face and the
techniques that have been developed to overcome them. 

\section{The Limited Workspace Model}

Designing algorithms that require little working memory is a 
classic and well-known challenge in theoretical computer science.
Over the years, it has been attacked from many different angles.
In \emph{computational complexity theory}, the 
complexity class \LOGSPACE contains all algorithmic problems 
that can be solved with a workspace that has a logarithmic
number of bits~\cite{AroraBa09,Goldreich08}. The 
research on \LOGSPACE has led to several surprising 
insights~\cite{Immerman88,Szelepcsenyi87,Savitch70}, 
perhaps most recently the $st$-connectivity algorithm for 
undirected graphs by Reingold~\cite{Reingold08}, an unexpected 
application of expander graphs. The focus in
computational complexity is mainly on what can be done
\emph{in principle}
in \LOGSPACE. Obtaining \LOGSPACE-algorithms
with a low running time is usually a secondary concern.
\emph{Streaming algorithms} are another classic area where the
amount of writable memory is limited~\cite{Muthukrisnan05}. 
Here, the data items
can be read only once (or a limited number of times), in
an unknown order. In addition, the algorithm may maintain
a sublinear amount of storage to accomplish its task.
There are several results on geometric problems in the
streaming model~\cite{Indyk04,Chan06c}, mostly dealing with 
problems concerning clustering and extent 
measures~\cite{AgarwalHPVa04},
but also with classic questions, such as computing 
convex hulls or low-dimensional linear programming~\cite{ChanCh07}.
The \emph{in-place model} assumes that the input
resides in a memory that can be read and written arbitrarily.
The algorithm may use only a constant number of
additional memory cells.
This means that all complex data 
structures need to be encoded with the help of the
input elements, severely restricting the algorithmic
options at our disposal. There are geometric in-place algorithms
for computing the convex hull or 
the Voronoi diagram of a given planar point 
set~\cite{BronnimannIaKaMoMoTo02,BronnimannChCh04,ChanCh08,ChanCh10}.
In \emph{succinct data structures},
the goal is to minimize the precise number of bits
that are needed to represent the data, getting as close to the
entropy bound as possible~\cite{Navarro16}. At the same time, 
one would like to 
retain the ability to support the desired data structure 
operations efficiently.
In computational geometry, succinct data
structures have been developed for classic problems like range searching,
point location, or nearest neighbor search~\cite{He13}.

The present notion of a \emph{limited workspace algorithm},
which constitutes the main focus of this survey, was 
introduced to the computational geometry community by 
Tetsuo Asano~\cite{Asano08}. 
Initially, the model postulated a workspace that consists
of a constant number of cells~\cite{AsanoMuRoWa11,AsanoMuWa11}.
Over the years, this 
was extended to also allow for time-space trade-offs.
In the following, we describe the most general variant of the model. 

The model is similar to the standard word RAM in which 
the memory is organized as a sequence of \emph{cells}
that can be accessed in constant time via their \emph{addresses}. 
Each memory cell stores a single data word~\cite{Knuth97}. 
In contrast to the standard word RAM, the limited workspace 
model distinguishes two kinds of cells: (i) \emph{read-only cells} 
that store the input; and (ii) \emph{read-write cells} 
that constitute the algorithm's \emph{workspace}. 
A cell of the workspace can store either an integer 
of $O(\log n)$ bits, a pointer to some input cell, or a 
root of some polynomial of bounded degree that depends 
on a fixed number of input variables (for example, the 
intersection point of two lines, each passing through 
two input points). Here, $n$ denotes the input size (measured
in the number of cells).

\begin{figure}
  \centering
    \includegraphics{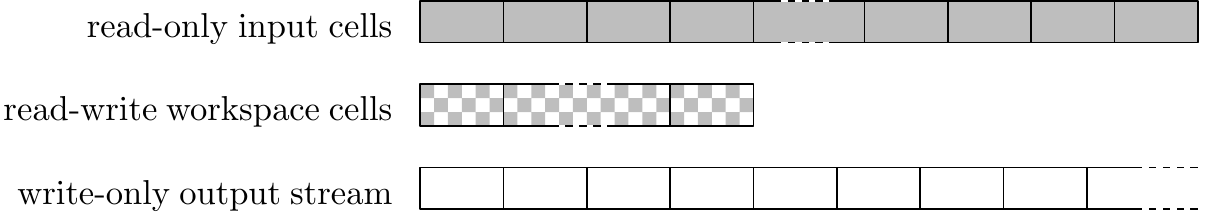}
\caption{The different types of memory 
cells available in the limited workspace model.}
\label{fig:mem-cells}
\end{figure}

Typically, the output is larger than the workspace.
Thus, we assume that the output is written sequentially 
to a dedicated write-only stream. Once a data item
is written to the output stream, it cannot be accessed
again by the algorithm; see Figure~\ref{fig:mem-cells}. It usually 
depends on the algorithmic problem at hand
how exactly the output should be structured.
Our algorithms
are typically deterministic, but there are also results
that use randomization. Randomized algorithms in the
limited workspace model may use an unlimited stream of
random bits. However, these bits cannot be accessed 
arbitrarily: if the algorithm wishes to revisit previous 
random bits, it needs to store them in its workspace.\footnote{Refer
to Goldreich's book~\cite{Goldreich08} for further discussion of 
randomness in the presence of space restrictions.}
As usual, the running time of an algorithm on a given
input is measured as the number of elementary operations
that it performs. The space usage is counted as the number of
cells in the workspace. Note that the input does not contribute to 
this count, but any other memory consumption does (like memory 
implicitly allocated during recursion). 

To bring our model into perspective, we compare it with the related 
models. Unlike the typical viewpoint from computational complexity theory,
our goal is to find the best running time that can be
achieved with a given space budget. In contrast to streaming
algorithms, we may read the input repeatedly and with random access.
Unlike in-place algorithms, our input resides in read-only 
memory, and the workspace can potentially contain arbitrary
data. When analyzing the space usage of an algorithm, we
typically ignore constant factors and lower order terms, whereas 
these play a crucial role in succinct data structures.

Although the objective is to have algorithms that are fast and 
at the same time use little---ideally constant---workspace, it 
is normally not possible to achieve both goals simultaneously.
Thus, the aim is to balance the two. Often, this 
results in a \emph{trade-off}:
as more cells of workspace become
available, the running time decreases.
Now, the precise relationship between the running time and 
the available space becomes the main focus of our attention. 
For many problems,
this dependency is linear (i.e., by doubling the amount of workspace, 
the running time can be halved). However, recent research has
uncovered a wide range of possible trade-offs that often interpolate
smoothly between the best known results for constant and for 
$O(n)$ cells of workspace.

Table~\ref{tab:summary} summarizes recent algorithms for geometric
problems in the limited workspace model. 
In the rest of the survery, we will briefly discuss 
the problems that have been considered and the current state of the 
art.  

\begin{table}[htb]
\centering
\begin{tabular}{llcl}
Problem & Running Time & Space & Source \\ 
\hline
shortest path in a tree & $n$ & $1$ & \cite{AsanoMuWa11} \\ 
all nearest larger neighbors & $n\log_s n$ &$s$ & \cite{AsanoKi13}\\ 
sorting & $n^2/{(s\log n)} + n\log s$ & $s$ & \cite{AsanoElKa13}\\
\hline
convex hull of a point set& $n^2/{(s\log n)} + n\log s$ & $
s$ & \cite{DarwishEl14}  \\ 
triangulation of a point set & $n^2/s + n\log s$ & $s$ &
\cite{AsanoMuRoWa11,AhnBaOhSi17} \\ 
Voronoi diagram/Delaunay triangulation & $n^2\log s/s$ & 
$s$ & \cite{KormanMuvReRoSeSt17,BanyassadyKoMuvReRoSeSt17}\\
Voronoi diagrams of order $1$ to $K$ ($K\leq \sqrt{s}$)& 
$n^2K^6 \text{polylog}(s)/s$ & 
$s$ & \cite{BanyassadyKoMuvReRoSeSt17}\\
Euclidean minimum spanning tree & $n^3 \log s/s^2$ & $s$ & 
\cite{AsanoMuRoWa11,BanyassadyBaMu18} \\
\hline
triangulation of a simple polygon& $n^2/s$ & 
$s$ & 
\cite{OhAh17,AronovKoPrvReRo16} \\
balanced partition of a simple polygon & $n^2/s$ & $s$ & \cite{OhAh17}\\ 
shortest path in a simple polygon& $n^2/s$ & $s$ & 
\cite{AsanoMuWa11,HarPeled16,OhAh17}  \\
triangulation of a monotone polygon& $n\log_s n$ &$s$ & 
\cite{AsanoKi13}   \\ 
\hline
visibility in a simple polygon & $n^2/2^s + n\log^2{n}$ & 
$s$ & \cite{BarbaKoLaSi14} \\
$k$-visibility in a polygonal domain & $n^2/s + n\log s$ & $s$ &
\cite{BahooBaBoDuMu17}  \\
weak visibility in a simple polygon & $n^2$ & $1$ & \cite{Abrahamsen13}\\
minimum link path in a simple polygon& $n^2$ & $1$ & \cite{Abrahamsen13} \\
convex hull of a simple polygon & $n^2\log n/2^s$ & $s^*$  
& \cite{BarbaKoLaSaSi15}  \\
convex hull of a simple polygon & $n^{1+1/\log s}$ & $s^{**}$  
& \cite{BarbaKoLaSaSi15}  \\
common tangents of two disjoint polygons & $n$ & $1$ & 
\cite{Abrahamsen15,AbrahamsenWa16}   \\
\end{tabular}

\caption{A selection of problems and the best known running times
in the limited workspace model. The $O$-notation 
has been omitted in the bounds. If the space usage is given as 
$s$, then $s$ may range from $1$ to $n$. For $s^*$, it may range from 
$1$ to $o(\log n)$, and for $s^{**}$ it ranges from $\log n$ to 
$n$. The running times for $k$-visibility 
and for the convex hull of a simple polygon have been simplified.}
\label{tab:summary}
\end{table}

\section{Point Sets in the Plane}

The most basic problems in computational geometry 
concern point sets in the plane: convex hulls, triangulations,
Voronoi diagrams, Euclidean minimum spanning trees, and
related structures have captured the imagination of computational
geometers for decades~\cite{deBergChvKrOv08}. Naturally, these structures 
have
also been an early focus in the study of the limited workspace model.

\subsection{Convex Hulls} 
Asano~\etal~\cite{AsanoMuRoWa11} observed that the edges of
the convex hull for a set of $n$ points in the plane can
be found in $O(n^2)$ time when $O(1)$ cells of workspace
are available, through a straightforward application of 
Jarvis' classic gift-wrapping algorithm~\cite{Jarvis73}.
Darwish and Elmasry~\cite{DarwishEl14} extended this result to 
an asymptotically optimal time-space
trade-off. For this, they developed a 
space-efficient heap data structure. 
The optimality of the trade-off is implied
by a lower bound for the sorting problem due to Beame~\cite{Beame91}. 
The algorithm by Darwish and Elmasry needs
$O(n^2/(s \log n) +n\log s)$ time in order to output the edges 
of the convex hull, provided that
$s$ cells of workspace are available. The underlying heap data structure 
is very versatile, and it can also
be used to obtain time-space trade-offs for the sorting
problem and for computing a triangulation of a planar
point set~\cite{AsanoElKa13,KormanMuvReRoSeSt17}.

\subsection{Delaunay Triangulations and Voronoi Diagrams}

\paragraph*{Constant Workspace.}
For computing the Delaunay triangulation $\DT(S)$  and the Voronoi diagram 
$\VD(S)$ of a given set $S= \{p_1, \dots, p_n\}$ of $n \geq 3$ sites 
in the plane, 
Asano~\etal~\cite{AsanoMuRoWa11} presented an $O(n^2)$-time 
algorithm that uses $O(1)$ cells of workspace under a general 
position assumption (i.e., no 3 sites of $S$ lie on
a common line and no 4 sites of $S$ lie on a common
circle).  The \emph{Voronoi diagram}
for $S$, $\VD(S)$, is obtained by classifying the points
in the plane according to their nearest neighbor in
$S$. For each site $p \in S$, the open set
of points in $\R^2$ with $p$ as their unique nearest
site in $S$ is called the \emph{Voronoi cell} of $p$ and is 
denoted by $C(p)$.
The \emph{Voronoi edge} for two sites $p, q \in S$ consists of all 
points in the plane with $p$ and $q$ as their only two nearest sites.
It is a (possibly empty) subset of the \emph{bisector}
$B(p,q)$ of $p$ and $q$, the line that has all points 
with the same distance from $p$ and from $q$.
Finally, \emph{Voronoi vertices} are the points in the plane 
that have exactly 
three nearest sites in $S$. It is well known that $\VD(S)$ has 
$O(n)$ vertices, $O(n)$ edges, and $n$ cells~\cite{deBergChvKrOv08}.

First, we explain how 
Asano~\etal~\cite{AsanoMuRoWa11} find a single edge of a given 
cell $C(p)$ of $\VD(S)$.
Then, we repeatedly use this procedure to find all the edges 
of $\VD(S)$, using $O(1)$ cells of workspace. 

\begin{lemma}\label{lem:singleVoronoiEdge}
Given a site $p \in S$ and a ray $\gamma$ from $p$ that 
intersects $\partial C(p)$, we can 
report an edge $e$ of $C(p)$ 
whose relative closure intersects $\gamma$ in $O(n)$ time 
using $O(1)$ cells of workspace.
\end{lemma}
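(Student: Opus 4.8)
The plan is to exploit the convexity of the Voronoi cell. Writing $H(p,q)$ for the closed half-plane of points at least as close to $p$ as to $q$, we have $C(p)=\bigcap_{q\in S\setminus\{p\}} H(p,q)$, so $C(p)$ is a convex region, and $p$ lies in its interior. I would parametrize the ray as $\gamma(t)=p+t\,d$ for $t\ge 0$, where $d$ is the direction of $\gamma$. For a fixed site $q\neq p$, the point $\gamma(t)$ leaves $H(p,q)$ precisely when $t$ exceeds the value $t_q$ at which $\gamma$ meets the bisector $B(p,q)$; I set $t_q=+\infty$ when $\gamma$ stays in $H(p,q)$ for all $t\ge 0$. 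Each $t_q$ is the solution of a single linear equation and can thus be computed in $O(1)$ time from $p$, $q$, and $d$.

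First I would make one pass over $S$, maintaining in $O(1)$ cells the smallest crossing value $t^\ast=\min_{q}t_q$ together with a minimizing site $q^\ast$. Since $p$ lies in the interior of the convex region $C(p)$ and $\gamma$ meets $\partial C(p)$ by hypothesis, the ray exits $C(p)$ at the unique boundary point $\gamma(t^\ast)$, and this point lies on the edge $e$ of $C(p)$ supported by the line $B(p,q^\ast)$. Hence a single linear scan already identifies the line carrying $e$ and a point of $e$.

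It remains to report the endpoints of $e$. Here I would restrict attention to the line $\ell=B(p,q^\ast)$ and parametrize its points by a scalar $u$, say $x(u)$. For every other site $r\in S\setminus\{p,q^\ast\}$, the set of $u$ with $x(u)\in H(p,r)$ is a half-line of the parameter axis. A second pass over $S$ then intersects these half-lines, maintaining the resulting interval $[u_1,u_2]$ (possibly unbounded on one or both sides) in $O(1)$ cells; its endpoints $x(u_1)$ and $x(u_2)$ are the two endpoints of $e$, each being either a Voronoi vertex or a point at infinity. Both passes run in $O(n)$ time and use $O(1)$ workspace, which yields the claimed bound.

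The main point requiring care is the correctness of the first scan: it rests on the fact that a ray from an interior point of a convex set crosses the boundary exactly once, so that the globally smallest crossing parameter selects exactly the edge through which $\gamma$ leaves $C(p)$. The remaining delicate issues are the bookkeeping for unbounded cells and edges (reflected by $t_q=+\infty$ and by unbounded parameter intervals) and the degenerate configurations excluded by the general-position assumption, which guarantees that $q^\ast$ and the endpoints of $e$ are well defined; when $\gamma(t^\ast)$ happens to be a Voronoi vertex, the phrase \emph{relative closure} in the statement lets us report either incident edge, so an arbitrary tie-break in the first scan is harmless.
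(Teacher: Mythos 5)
Your proposal is correct and follows essentially the same two-pass strategy as the paper's proof: a first linear scan over $S$ to find the bisector $B(p,q^\ast)$ crossed by $\gamma$ closest to $p$, and a second scan that clips this bisector against all other bisectors $B(p,r)$ while maintaining a single interval (two endpoints) in $O(1)$ cells. Your additional remarks on the convexity argument, unbounded intervals, and the tie-break at a Voronoi vertex only make explicit details the paper leaves implicit.
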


\begin{proof}
Among all bisectors $B(p, q)$, for $q \in S \setminus \{p\}$,
we find a bisector $B^* = B(p, q^*)$ that intersects $\gamma$ closest to 
$p$.  This can be done by scanning the sites of 
$S$ while maintaining a closest bisector.
The desired Voronoi edge $e$ is a subset of $B^*$.
To find $e$, we scan $S$ again.
For each $q \in S \setminus \{p, q^*\}$, we compute 
the intersection between $B(p, q)$ and $B^*$. 
Each such intersection determines a
piece of $B^*$ not in $\VD(S)$, namely
the part of $B^*$ that is closer to $q$ than to $p$. 
The portion of $B^*$ that remains after the scan 
is exactly $e$.
Since the current piece of $B^*$ in each step is connected,
we must maintain at most two endpoints to represent the current
candidate for $e$. The claim follows.
\end{proof}

\begin{theorem}\label{thm:cwsVoronoi}
Suppose we are given $n$ planar sites 
$S=\{p_1, \dots, p_n\}$ in general position.
We can find all the edges of 
$\VD(S)$ in $O(n^2)$ time 
using $O(1)$ cells of workspace.
\end{theorem}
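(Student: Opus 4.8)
The plan is to process the sites one at a time and, for each site $p = p_j$, enumerate the edges of its Voronoi cell $C(p)$ by repeatedly invoking Lemma~\ref{lem:singleVoronoiEdge}. The key accounting observation is that each edge of $\VD(S)$ bounds exactly two cells, so the number of (cell, edge) incidences is twice the number of Voronoi edges, i.e., $O(n)$. Writing $k_p$ for the number of edges of $C(p)$, we thus have $\sum_p k_p = O(n)$. Hence, if I can enumerate the edges of a single cell using only $O(1)$ applications of the Lemma per edge plus $O(n)$ start-up work per cell, the total running time is $\sum_p \bigl(O(n) + O(n)\, k_p\bigr) = O(n^2)$, while the workspace stays $O(1)$.

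For a single cell $C(p)$, I would perform a rotational sweep anchored at $p$. Since $C(p)$ is convex and contains $p$ in its interior, its boundary $\partial C(p)$ is seen monotonically in angle from $p$: every direction meets $\partial C(p)$ in at most one point, and consecutive edges correspond to consecutive angular intervals. To start, I shoot a ray $\gamma$ from $p$ toward any other site, say $p_1$ (or $p_2$ if $p = p_1$); such a ray crosses the bisector with that site, so it is no longer strictly closer to $p$ and has left $C(p)$. Thus $\gamma$ meets $\partial C(p)$ and the Lemma returns a first edge $e_0$, which I record as the stopping edge. Then I iterate: given the current edge $e_i$, I take its counterclockwise endpoint $v_i$ (a Voronoi vertex), form the ray from $p$ through $v_i$ rotated infinitesimally counterclockwise, and apply the Lemma to obtain the next edge $e_{i+1}$. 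By the general-position assumption every Voronoi vertex has degree exactly three, so this infinitesimal rotation---realized symbolically, by selecting among the bisectors through $v_i$ the one bounding the next angular interval---unambiguously identifies $e_{i+1}$. The sweep stops once it returns to $e_0$. Throughout, the workspace only holds $p$, the stopping edge, and the current edge, i.e., $O(1)$ cells.

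The step that needs the most care is the treatment of \emph{unbounded} cells. If $C(p)$ is unbounded, two of its edges are rays, and there is an angular interval---the recession cone of $C(p)$---in which rays from $p$ stay inside the cell forever and never meet $\partial C(p)$, so the naive sweep would stall. I would handle this by detecting when the current edge is unbounded (its clipped piece of $B^*$ in the Lemma's construction has a free end) and then jumping across the gap to the other unbounded edge, which can likewise be located by a single Lemma call with an appropriately chosen ray; alternatively, one can enclose $S$ in a large bounding box and treat its corners as virtual vertices, making every cell effectively bounded. A secondary point is that each edge gets reported twice, once from each incident cell; this is harmless for the claimed bound, but if a clean listing is desired one can report the edge on bisector $B(p,q)$ only when the index of $p$ is smaller than that of $q$. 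With the unbounded case dispatched, correctness and the $O(n^2)$ time, $O(1)$ space bounds follow from the per-cell analysis above.
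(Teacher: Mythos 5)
Your proposal is correct and follows essentially the same route as the paper: per-site enumeration of $\partial C(p)$ by repeated calls to Lemma~\ref{lem:singleVoronoiEdge}, a start ray aimed at another site, duplicate suppression via the index rule $i<j$, and the accounting $\sum_p k_p = O(n)$ yielding $O(n^2)$ time with $O(1)$ workspace. The only divergence is the treatment of unbounded cells, where the paper's fix is simpler than yours---restart from the right endpoint of the first edge and walk clockwise until the second unbounded edge is met; note that your ``single Lemma call'' jump is not realizable as stated, since rays inside the recession cone violate the lemma's precondition and the direction of the other unbounded edge is not known in advance, though your bounding-box alternative would also work.
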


\begin{proof}
We process each site $p \in S$ to find the 
edges on $\partial C(p)$. 
We choose $\gamma$ as the ray from $p$ 
through an arbitrary site of $S \setminus \{p\}$. 
Then, $\gamma$ intersects 
$\partial C(p)$. Using 
Lemma~\ref{lem:singleVoronoiEdge}, we find an edge $e$ 
of $C(p)$ that intersects $\gamma$.
We let $\gamma'$ be the ray from $p$ through the left endpoint 
of $e$ (if it exists), and we apply Lemma~\ref{lem:singleVoronoiEdge} 
again
to find the left adjacent edge $e'$ of $e$ in $C(p)$.
We repeat this to find further
edges of $C(p)$, in counterclockwise order,
until we return to $e$ or until we find an unbounded edge of $C(p)$. 
In the latter case, we start again from the right 
endpoint of $e$ (if it exists), and we find the remaining
edges of $C(p)$, in clockwise order. 

Since each Voronoi edge is incident to two Voronoi cells, 
we will encounter each edge twice. 
To avoid repetitions in the output, whenever we find an edge $e$ 
of $C(p_i)$ with $e \subseteq B(p_i, p_j)$, we 
report $e$ if and only if $i<j$. 
Since $\VD(S)$ has $O(n)$ edges, and since
reporting one edge takes $O(n)$ time and $O(1)$ cells 
of workspace, the result follows.
\end{proof}

\paragraph*{Time-Space Trade-Off.}
Korman~\etal~\cite{KormanMuvReRoSeSt17}
gave a randomized time-space trade-off for computing 
$\VD(S)$ that runs in expected time 
$O((n^2/s)\log s+n \log s \log^*s)$, provided that
$s$ cells of workspace may be used. 
The algorithm is based on  a space-efficient implementation of the
Clarkson-Shor random sampling technique~\cite{ClarksonSh89}
that makes it possible to divide the problem into $O(s)$ 
subproblems with $O(n/s)$ sites each. 
All subproblems can then be handled simultaneously
with the constant workspace method of Asano~\etal~\cite{AsanoMuRoWa11}, 
resulting in the desired running time. 

More recently, Banyassady~\etal~\cite{BanyassadyKoMuvReRoSeSt17} 
developed a better trade-off  that is also applicable
to Voronoi diagrams of a higher order.
Their algorithm is deterministic, and it computes the
traditional Voronoi diagram as well as the farthest-site Voronoi diagram
of $S$ in $O(n^2\log s/s)$ time, using $s$ cells of workspace,
for some $s \in \{1, \dots, n\}$.
The main idea is to obtain $\VD(S)$ by processing
$S$ in \emph{batches} of $s$ sites each, using a special
procedure to handle sites whose Voronoi cells have a large number of
edges. 

We now describe this algorithm in more detail. 
We assume that our algorithm has
$O(s)$ cells of workspace, for the given parameter $s$.
This does not strengthen the model, and it simplifies
the analysis of the algorithms, as we can assume that we
can manipulate structures that require $O(s)$ cells of space
(such as a Voronoi diagram of $O(s)$ sites), without a detailed
analysis of the associated constants.

\begin{lemma}\label{lem:batchVoronoiEdge}
Suppose we are given a set $V = \{p_1, \dots, p_s\}$
of $s$ sites in $S$, together with, for each $i \in \{1, \dots, s\}$, 
a ray $\gamma_i$  from $p_i$ that intersects
$\partial C(p_i)$. We can report, for each $i$, 
an edge $e_i$ of $C(p_i)$  that intersects $\gamma_i$, 
in total time $O(n\log s)$ using $O(s)$ 
cells of workspace.
\end{lemma}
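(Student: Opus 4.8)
The plan is to batch the $s$ queries so that we do not pay the $O(n)$ cost of Lemma~\ref{lem:singleVoronoiEdge} separately for each of them (which would give a hopeless $O(ns)$), but instead amortize scans of $S$ over all $s$ queries. First I would partition $S$ into $\lceil n/s\rceil$ \emph{batches} $B_1,\dots,B_{\lceil n/s\rceil}$ of at most $s$ sites each, and process them one at a time. For a batch $B_j$ I would build the ordinary Voronoi diagram $\VD(B_j\cup V)$ of the $O(s)$ sites of $B_j$ together with all $s$ query sites; with $O(s)$ cells of workspace this takes $O(s\log s)$ time, and summed over the $n/s$ batches this is the desired $O(n\log s)$. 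The point of including $V$ is that every query site $p_i$ is now a site of $\VD(B_j\cup V)$ and hence has a well-defined convex cell there; adding the other query sites as extra constraints is harmless, because $V\subseteq S$, so they are legitimate sites of $\VD(S)$ anyway.

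The identity I would exploit is that the true cell decomposes as an intersection over batches: writing $H_{p_i,q}=\{x:|x-p_i|\le|x-q|\}$, we have $C(p_i)=\bigcap_{q\in S}H_{p_i,q}=\bigcap_j\big(\text{cell of }p_i\text{ in }\VD(B_j\cup V)\big)$, using $V\subseteq S$. Each batch cell is convex and contains $p_i$, so $\gamma_i$ leaves the intersection exactly at the \emph{closest} of its exit points from the individual batch cells, and the edge it crosses lies on the bisector realising that closest exit. Thus, for each batch and each query $i$, I would locate $p_i$'s convex cell in $\VD(B_j\cup V)$ and binary-search its boundary, in angular order around $p_i$, for the unique edge crossed by $\gamma_i$, in $O(\log s)$ time; maintaining the per-query closest exit across batches then yields, for each $i$, the bisector $B(p_i,q_i^{*})$ supporting the sought edge $e_i$. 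This first pass costs $O(n\log s)$ time and $O(s)$ space (the searches sum to $O(s\log s)$ per batch, dominated by the diagram construction).

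It remains to pin down the two endpoints of each $e_i$, i.e.\ the Voronoi vertices bounding it. Here I would run a second pass over the same batches: knowing the supporting line $B_i^{*}=B(p_i,q_i^{*})$, I would clip it against $p_i$'s convex cell in each $\VD(B_j\cup V)$---again a single $O(\log s)$ line/convex-polygon intersection---obtaining an interval, and intersect these intervals across all batches. By the same decomposition, the result is exactly $B_i^{*}\cap\overline{C(p_i)}=e_i$, together with its endpoints. This again costs $O(n\log s)$ time and $O(s)$ space, giving the claimed bounds overall.

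The step I expect to be most delicate is establishing the $O(\log s)$ cost per (query, batch) pair together with its correctness in the presence of \emph{unbounded} cells: a batch cell of $p_i$ may be unbounded, in which case $\gamma_i$ (or the line $B_i^{*}$) need not exit it, meaning that batch imposes no binding constraint in that direction---this has to be folded consistently into both the ``closest exit'' and the interval intersection. One must also verify that $\VD(B_j\cup V)$ can be stored in $O(s)$ space with each query site's cell boundary directly accessible in angular order, so that the binary searches really cost $O(\log s)$, and that general position rules out $\gamma_i$ grazing a Voronoi vertex. Modulo these points---and the routine observation that the closest exit over batches equals the true exit---the construction delivers the stated $O(n\log s)$ time with $O(s)$ cells of workspace.
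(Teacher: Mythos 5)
Your proposal is correct and follows essentially the same route as the paper's proof: partition $S$ into $n/s$ batches, build $\VD(V \cup T_j)$ per batch in $O(s\log s)$ time, use a first scan to find for each $p_i$ the bisector realising the closest intersection of $\gamma_i$ over all batch diagrams, and a second scan to clip that bisector against the batch cells, which is exactly the paper's two-pass scheme. The only cosmetic differences are that you make the correctness argument explicit via the halfplane-intersection identity $C(p_i)=\bigcap_j(\text{cell of }p_i\text{ in }\VD(V\cup T_j))$ (which the paper states implicitly) and that you binary-search each cell boundary in $O(\log s)$ per query where the paper uses a simple $O(s)$-time traversal of the diagram per batch---both are dominated by the $O(s\log s)$ diagram constructions, so the bounds agree.
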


\begin{proof}
The algorithm scans the sites twice.
In the first scan, we find, for $i = 1, \dots, s$, the 
bisector $B^*_i$ that contains $e_i$. In the second scan, we 
determine for each $B^*_i$, $i = 1, \dots, s$, 
the exact portion of $B^*_i$ that determines $e_i$.

We start by partitioning $S$ into $n/s$ disjoint \emph{batches} 
$T_1, T_2, \dots, T_{n/s}$ of $s$ consecutive sites in the input.
Then, we compute $\VD(V \cup T_1)$ in $O(s\log s)$ time, 
using $O(s)$ cells of workspace.  For each $i$,
we find the edge $e_i'$ of $\VD(V \cup T_1)$ that intersects 
$\gamma_i$ closest to $p_i$, and we 
store the bisector $B_i'$ that contains $e_i'$. 
This takes time $O(|V \cup T_1|) = O(s)$ by a simple
traversal of the diagram.
Now, for $j = 2, \dots, n/s$,
we repeat this procedure with $\VD(V \cup T_j)$, and we update
$B_i'$ if the new diagram gives an edge
that intersects $\gamma_i$ closer to $p_i$ than $B_i'$. 
After all batches $T_1, \dots, T_{n/s}$ have been scanned, 
the current bisector $B_i'$ is the desired final bisector $B_i^*$, for 
$i = 1, \dots, s$.

In the second scan, to find the portion of $B^*_i$ that constitutes 
$e_i$, for $i = 1, \dots, s$, we again consider the batches 
$T_1, \dots, T_{n/s}$ of size $s$.
As before, for $j = 1, \dots, n/s$, we compute 
$\VD(V \cup T_j)$, and 
for $i = 1, \dots, s$, we find the portion 
of $B_i^*$ inside the cell $C_{ij}$ of $p_i$ in $\VD(V \cup T_j)$. 
We update the endpoints of $e_i$ to the 
intersection of the current $e_i$ and the cell $C_{ij}$.
After processing $T_j$, 
there is no site in $V \cup \bigcup_{k=1}^{j} T_k$  that is 
closer to $e_i$ than $p_i$.
Thus, after the second scan, $e_i$ 
is the edge of $C(p_i)$ that intersects $\gamma_i$. 

In total, we construct $O(n/s)$ Voronoi diagrams, each with at most
$2s$ sites, in $O(s \log s)$ time each. The remaining work in each step
of each scan is $O(s)$.
Thus, the total running time is $O(n\log s)$. At each time, we 
have $O(s)$ sites in workspace and a constant amount of information 
per site, including their Voronoi diagram. Thus, 
the space bound is not exceeded.
\end{proof}

The global time-space trade-off algorithm has two phases. 
In the first one, we scan $S$ sequentially. During this scan, 
we keep a set $V$ of $s$ sites from $S$ in the workspace whose
Voronoi cells we intend to find.
In the beginning, $V$ consists of the first $s$ sites from $S$,
and we apply Lemma~\ref{lem:batchVoronoiEdge} to 
compute one edge of $\VD(S)$ on each cell $C(p_i)$, for 
$p_i \in V$.
The starting ray $\gamma_i$ for each $p_i \in V$
is constructed in the same way as 
in Theorem~\ref{thm:cwsVoronoi}. After that, we 
update these rays to find the next 
edge on each $C(p_i)$, for $p_i \in V$, as in 
Theorem~\ref{thm:cwsVoronoi}.
Now, however, for each $p_i \in V$, we store the original ray 
$\gamma_i'$ in addition to the current ray $\gamma_i$. 
These two rays help us determine
which edges on $C(p_i)$ have been reported already.
Whenever all edges for a site $p_i \in V$ have been found, we replace 
$p_i$ with the next relevant site from $S$, and we say that $p_i$ has 
been \emph{processed}; see Figure~\ref{fig:VD}.

\begin{figure}
  \centering
    \includegraphics{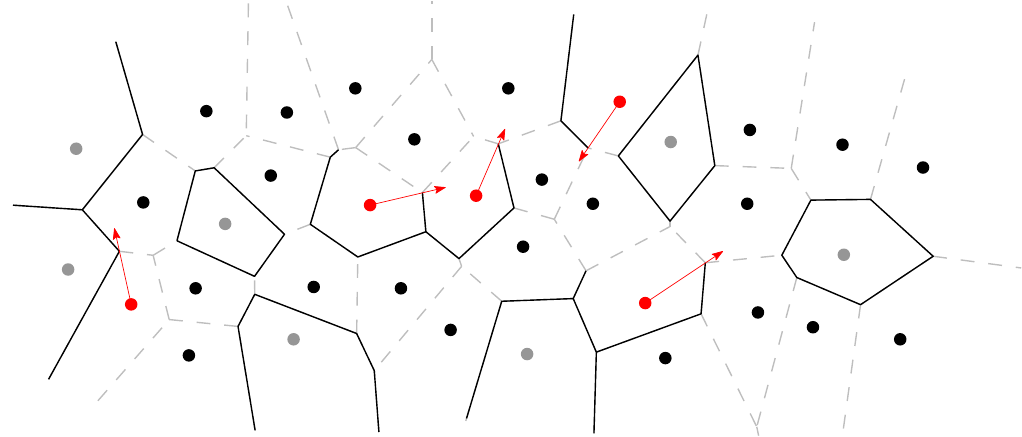}
\caption{Illustration of the algorithm of 
Banyassady~\etal~\cite{BanyassadyKoMuvReRoSeSt17} after nine iterations of 
Lemma~\ref{lem:batchVoronoiEdge} for a set $S$ of $n = 35$ sites and 
workspace of $O(\log n)$ cells. The black segments are the 
edges of $\VD(S)$ that have already been found. The gray and the red 
sites represent, respectively, the sites which have been fully processed
and those which are currently in the workspace.}
\label{fig:VD}
\end{figure}

Since (i) the Voronoi diagram of $S$
has $O(n)$ edges;  (ii) in each iteration, we find $s$ edges;
and (iii) each edge is encountered at most twice, it follows that after 
$O(n/s)$ iterations of this procedure, fewer than $s$ sites remain in 
$V$. All other sites of $S$ have been processed

At this point, phase 1 of the algorithm ends 
(and the second one starts). During the execution of the first 
phase, we output only some of the 
Voronoi edges, according to the following rule: suppose we discover 
the edge $e$ while scanning the site $p_i$, and let $p_j$ be the
other site with $e$ on its cell boundary.
We output $e$ only if either (i) $p_j$ is a \emph{fresh} site, i.e., 
$p_j$ has not been processed yet and is not currently in $V$
(this can be tested using the index of the last site inserted into $V$); 
or (ii) $p_j$ is in $V$ and $e$ 
has not been reported as an edge of $C(p_j)$ (this can be tested in 
$O(\log s)$ time with a binary search tree that contains all
elements in $V$ and using 
the rays $\gamma_j$ and $\gamma'_j$ ). In this way, each Voronoi edge
is reported at most once.
 
Let $V_R$ be the set of sites that have not been processed 
when the first phase of the algorithm ends.
We cannot use Lemma~\ref{lem:batchVoronoiEdge} any longer, 
since it needs two passes over the input to find 
a single new edge for each site in $V_R$, and the sites in $V_R$ may
have too many associated Voronoi edges. 
Each remaining Voronoi edge must be on the cell boundaries of 
two sites in $V_R$. 
Thus, in the second phase, we compute 
$\VD(V_R)$ in $O(s \log s)$ time.
Let $E_{R}$ denote the set of its edges. Some edges of $E_R$ 
also occur in $\VD(S)$ (possibly in truncated form).
To identify these edges,
we proceed similarly to the second scan of 
Lemma~\ref{lem:batchVoronoiEdge}: in each step, we compute the 
Voronoi diagram $\mathcal{V}$ of $V_R$ and a batch of $s$ sites from 
$S$. For each edge $e$ of $E_R$, we check whether $e$ occurs
in $\mathcal{V}$ (possibly in truncated form). If not, we set
$e$ to be empty; otherwise, we update 
the endpoints of $e$ according to the truncated version.
After all edges in $E_R$ have been checked, 
we continue with the next batch of $s$ sites from $S$. After 
processing all batches, the remaining non-empty edges 
in $E_R$ are precisely the edges of $\VD(S)$
that are incident to two cells in $\VD(V_R)$. 
Some of these edges may have
been reported in the first phase. We can identify each such edge
in $O(1)$ time by using the starting ray and the current ray  
of the two incident cells from the first phase. These rays are
still available in the workspace, because $V_R$ consists of those
sites that were left over at the end of the first phase.
We output the remaining edges in $E_R$.

\begin{theorem}\label{thm:tradeoffVoronoi}
Let $S = \{p_1, \dots, p_n\}$ be a 
planar $n$-point set in general position, and $s \in \{1, \dots n\}$.
We can report all  edges of $\VD(S)$ using 
$O((n^2/s) \log s)$ time and $O(s)$ cells 
of workspace. 
\end{theorem}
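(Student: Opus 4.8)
The plan is to verify that the two-phase algorithm described above is correct and meets the stated bounds. Since the algorithm itself has already been spelled out in detail, the remaining work lies entirely in the accounting: bounding the running time of each phase, confirming that the workspace never exceeds $O(s)$ cells, and---the delicate part---checking that every edge of $\VD(S)$ is emitted exactly once.

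First I would bound the first phase. Each iteration is a single application of Lemma~\ref{lem:batchVoronoiEdge} (advancing all $s$ active sites by one edge), costing $O(n\log s)$ time, plus $O(s\log s)$ time for the binary-search-tree bookkeeping used by the output rule. To bound the \emph{number} of iterations I would use the counting argument sketched above: $\VD(S)$ has $O(n)$ edges, each incident to two cells, so there are $O(n)$ cell--edge incidences in total; each iteration discovers one fresh edge per active site, i.e.\ $s$ incidences, so after $O(n/s)$ iterations all but fewer than $s$ sites are fully processed. Multiplying, phase~1 runs in $O\bigl((n/s)\cdot n\log s\bigr)=O\bigl((n^2/s)\log s\bigr)$ time. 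For phase~2, I would note that $\VD(V_R)$ is computed once in $O(s\log s)$ time, and then the $n/s$ batches are swept: computing $\VD(V_R\cup T_j)$ costs $O(s\log s)$ per batch, and testing and truncating each of the $O(s)$ edges of $E_R$ against it costs $O(s\log s)$ per batch, for a total of $O(n\log s)$. This is dominated by phase~1, so the overall time is $O\bigl((n^2/s)\log s\bigr)$.

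For the space bound I would observe that at every moment the workspace holds only $O(s)$ objects: the active set $V$ (or $V_R$) together with one batch $T_j$, the Voronoi diagram of these at most $2s$ sites, a constant amount of data per site (the two rays $\gamma_i,\gamma_i'$ and the current closest bisector), the search tree on $V$, and the edge set $E_R$. Each of these uses $O(s)$ cells, exactly as in Lemma~\ref{lem:batchVoronoiEdge}, so the workspace stays within $O(s)$.

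The main obstacle, and the part I would treat most carefully, is correctness: that each edge $e$ of $\VD(S)$, incident to cells $C(p_i)$ and $C(p_j)$, is output exactly once. I would argue by cases on the status of $p_i,p_j$. If at least one of them, say $p_i$, is processed in phase~1, then all edges of $C(p_i)$---in particular $e$---are discovered during that processing, exactly as in Theorem~\ref{thm:cwsVoronoi}; the output rule (emit $e$ iff the partner $p_j$ is fresh, or is in $V$ but has not yet recorded $e$) then guarantees a single emission, and one checks that such an $e$ does not reappear in $E_R$ because, once $p_i$ is deleted, $e$ lies interior to $p_j$'s cell in $\VD(V_R)$. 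Hence the edges left unreported after phase~1 are precisely those incident to two sites of $V_R$, and these form a subset of $E_R=\VD(V_R)$ (possibly truncated in $\VD(S)$). Phase~2 recovers exactly the survivors by intersecting each candidate edge with the cells of the batch diagrams, as in the second scan of Lemma~\ref{lem:batchVoronoiEdge}. The one subtle point is the overlap between the phases: an edge between two $V_R$ sites may already have been emitted in phase~1 while both sites were simultaneously active. I would verify that the stored starting and current rays of the two incident $V_R$-cells allow this to be detected and suppressed in $O(1)$ time---legitimate precisely because the rays of the leftover sites are still resident in the workspace at the start of phase~2---so that no edge is reported twice and the theorem follows.
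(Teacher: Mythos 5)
Your proposal is correct and follows essentially the same route as the paper's own proof: the same two-phase structure built on Lemma~\ref{lem:batchVoronoiEdge}, the same $O(n/s)$-iteration counting argument for phase~1, the same batch-pruning of $E_R$ in phase~2, and the same ray-based $O(1)$-time suppression of edges already emitted between two $V_R$ sites. Your treatment of the output-uniqueness case analysis is, if anything, slightly more explicit than the paper's, but it introduces no new ideas beyond the published argument.
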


\begin{proof}
Lemma~\ref{lem:batchVoronoiEdge} guarantees that the edges reported 
in the first phase are in $\VD(S)$. Also, conditions (i) and 
(ii) ensure that no edge is reported twice. Furthermore, in the 
second phase, no edge will be reported for a second time.
Since $V_R \subset S$, an 
edge $e \in \VD(S)$ is incident to the cell of two sites in $V_R$, 
if and only if the same edge 
(possibly in extended form) occurs in $\VD(V_R)$. 
Furthermore, for each edge $e$ of $\VD(V_R)$, 
we consider all sites of $S$, and we remove only the 
portions of $e$ that cannot be present in $\VD(S)$.

Regarding the running time, the first phase
requires $O(n/s)$ 
invocations of Lemma~\ref{lem:batchVoronoiEdge} and $O(n)$ tests whether
a Voronoi edge should be output. This takes $O((n^2/s)\log s + n\log s)$ 
time.  The second phase does a single scan over 
$S$, and it computes a Voronoi diagram for each batch of 
$s$ sites, which takes $O(n\log s)$ 
time in total. Thus, the total running time 
is $O((n^2/s)\log s)$.

At each point, we store 
only $s$ sites in $V$ (along with a constant amount of  
information attached to each site),
the batch of $s$ sites being processed and the 
associated Voronoi diagram. 
All of this requires $O(s)$ cells of workspace, as 
claimed. 
\end{proof}

Banyassady~\etal~\cite{BanyassadyKoMuvReRoSeSt17}~also showed 
that these techniques 
work for Voronoi
diagrams of higher order~\cite{AurenhammerKlLe13,Lee82}. 
More precisely, they obtained the first time-space trade-off for computing 
the family of all Voronoi diagrams for $S$ up to a given 
order $K$ in $O\big(n^2K^5(\log s+ K\log K)/s\big)$ time using 
$s$ cells of workspace.\footnote{The 
algorithm assumes
that $K = O(\sqrt{s})$. This assumption is due to the fact that
we need $\Theta(K)$ cells of workspace to represent a feature of
a Voronoi diagram of order $K$, so that our algorithm can handle
up to $\sqrt{s}$ features of a diagram of order $\sqrt{s}$ with $s$
cells of workspace.} 
The algorithm is based on
the \emph{simulated parallelization} technique, i.e, there is 
one instance of the
algorithm for each order $k$, for $k = 1, \dots, K$, 
where the
instance for order $k$ outputs the features for the order-$k$
Voronoi diagram and produces the input needed by the instance
for order $k+1$. The computational steps of
the individual instances and the memory usage are coordinated in such a
way as to make efficient use of the available workspace while
avoiding an exponential blow-up in the running time
that could be caused by a naive application of the technique.
\begin{problem}
What is the best trade-off for computing higher-order Voronoi diagrams?
Are there non-trivial lower bounds? Can we quickly compute a diagram
of a given order without computing the diagrams of lower order?
How can we compute Voronoi diagrams of order larger than $\sqrt{s}$
when $s$ words of workspace are available?
\end{problem}

If the goal is to compute any triangulation (rather than 
a Delaunay triangulation), Ahn~\etal~\cite{AhnBaOhSi17} gave 
a divide and conquer approach that is faster. Their method runs 
in $O(n^2/s + n \log s)$ time and uses $s$ cells of workspace. 

\subsection{Euclidean Minimum Spanning Trees}

\paragraph*{Constant Workspace.}
Let $S$ be a set of $n$ sites in the plane.
The \emph{Euclidean minimum spanning tree} of $S$, 
$\EMST(S)$, is the minimum spanning tree of the complete
graph $G_S$ on $S$, where the edges are weighted with the 
Euclidean distance of their endpoints. 
The task of computing $\EMST(S)$
was among the first problems to be
considered in the limited workspace model. 
Asano~\etal~\cite{AsanoMuRoWa11} provided an algorithm that reports 
the edges of $\EMST(S)$
by increasing length using $O(n^3)$ time and $O(1)$ cells 
of workspace. This is still the fastest algorithm for the problem when
constant workspace is available. 

We now describe the algorithm in more detail.
We assume that 
$S$ is in general position, i.e., no sites of $S$ lie on a common line,
no four points in $S$ lie on a common circle, and the edge lengths in 
$G_S$ are pairwise distinct. Then, $\EMST(S)$
is unique. Given $S$, we can compute $\EMST(S)$ in $O(n \log n)$ 
time if $O(n)$ cells of workspace are 
available~\cite{deBergChvKrOv08}.
It is well known that $\EMST(S)$ is a subgraph of
the \emph{Delaunay triangulation} of $S$, 
$\DT(S)$~\cite{deBergChvKrOv08}.  
Thus, it suffices to consider only the edges of 
$\DT(S)$ instead of the complete graph $G_S$ on $S$.

We apply Theorem~\ref{thm:cwsVoronoi},
to construct $\EMST(S)$ using the edges of 
$\DT(S)$~\cite{AsanoMuRoWa11}.  
Since $\DT(S)$ and $\VD(S)$ are graph-theoretic
duals, we can output all edges of $\DT(S)$ in $O(n^2)$ time 
with $O(1)$ cells
of workspace by adapting the algorithm in the proof of 
Theorem~\ref{thm:cwsVoronoi}. Even more, we can
implement a subroutine \texttt{clockwiseNextDelaunayEdge$(p, pq)$} 
that receives the Delaunay edge $pq$ of a site $p$ and finds the next 
clockwise Delaunay edge incident to $p$ in $O(n)$ time using $O(1)$ 
cells of workspace. 

By the \emph{bottleneck shortest
path} property of minimum spanning trees, 
a Delaunay edge $e = pq$ is not in $\EMST(S)$ if and only if 
$\DT(S)$ has a path between $p$ and $q$ 
consisting only of edges with length less than 
$|pq|$~\cite{Eppstein00}.
Let $\DT_{<e}$ be the subgraph of $\DT(S)$ with the 
edges of length less than $|e|$. The subgraph 
$\DT_{\leq e}$ is defined analogously, having all edges of $\DT(S)$ 
with length at most $|e|$.

\begin{lemma}\label{lem:checkEMSTedge}
Let $S$ be a planar point set and let $e = pq$ be an edge of $\DT(S)$.
Given $e$, we can determine whether $e$ appears
in $\EMST(S)$ in $O(n^2)$ time using $O(1)$ cells of workspace.
\end{lemma}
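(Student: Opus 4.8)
The plan is to combine the bottleneck characterization stated just above with a planar reformulation that replaces the connectivity question by a single face traversal, so that no ``visited'' information ever has to be stored. By the bottleneck property, $e = pq$ belongs to $\EMST(S)$ if and only if $p$ and $q$ are \emph{not} connected in $\DT_{<e}$; equivalently, putting $e$ back, $e\in\EMST(S)$ if and only if $e$ is a \emph{bridge} of $\DT_{\leq e}$. The key planar observation I would use is the standard fact that, in an embedded planar graph, an edge is a bridge exactly when the same face lies on both of its sides. This is what lets us avoid exploring the whole component of $p$ (which would seem to require marking vertices, and hence more than $O(1)$ cells): instead of a graph search, we only need to trace a \emph{single} face boundary and test whether both sides of $e$ lie on it.

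Concretely, I would trace the boundary of the face of $\DT_{\leq e}$ that is incident to the directed edge $(p,q)$. Starting from $(p,q)$, whenever the walk traverses a directed edge $uv$ and arrives at $v$, the next directed edge $vw$ of the face walk is obtained from $\texttt{clockwiseNextDelaunayEdge}(v, vu)$, calling the subroutine repeatedly to skip over any incident edge of length larger than $|pq|$ so that the walk stays inside $\DT_{\leq e}$. This ``clockwise-next-after-the-reverse'' rule is exactly the face-tracing permutation of the plane graph $\DT_{\leq e}$ with its induced rotation system, so the walk stays within one face cycle and eventually returns to $(p,q)$. I would then report $e\in\EMST(S)$ if and only if the walk traverses the reverse directed edge $(q,p)$ before it returns to $(p,q)$: encountering $(q,p)$ means the same face borders both sides of $e$, i.e. $e$ is a bridge. (Degenerate cases, e.g. when $p$ or $q$ has $e$ as its only short incident edge, are handled automatically: the skipping step wraps the rotation back to $qp$, immediately producing $(q,p)$, correctly certifying a bridge.)

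For the bounds, a single face-boundary walk traverses each directed edge of $\DT_{\leq e}$ at most once, and $\DT_{\leq e}\subseteq\DT(S)$ has $O(n)$ edges, so the walk performs $O(n)$ steps. The remaining cost lies in the long edges skipped at each vertex; here I would argue that the total number of skips over the \emph{entire} walk is still $O(n)$. Around a fixed vertex $v$, the short edges split the cyclic order of incident edges into arcs of consecutive long edges, and arriving at $v$ along a short edge $vu$ skips exactly the arc clockwise-following $vu$. Since each directed short edge $(u,v)$ occurs at most once in the walk, each such arc is skipped at most once, and the arcs at $v$ are disjoint with total length at most $\deg(v)$; summing over all vertices gives $O(n)$ skips in all. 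Hence there are $O(n)$ calls to $\texttt{clockwiseNextDelaunayEdge}$, each costing $O(n)$ time and $O(1)$ cells, for a total of $O(n^2)$ time; the workspace only ever holds the start edge $(p,q)$, the current directed edge, and the comparison target $(q,p)$, i.e. $O(1)$ cells.

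The main obstacle is resisting the natural but doomed approach of running a search from $p$: with only $O(1)$ cells we cannot keep the set of visited vertices, and a generic constant-word $st$-connectivity routine would be Reingold-style and far too slow for an $O(n^2)$ bound. The conceptual crux is therefore the planar bridge criterion, which collapses the connectivity test into one face trace. The one genuinely delicate technical point is the running-time analysis of the edge-skipping: one must verify that restricting the full Delaunay rotation system to $\DT_{\leq e}$ on the fly does not inflate the number of subroutine calls, which is precisely what the arc-counting argument above establishes.
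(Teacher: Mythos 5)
Your proposal is correct and follows essentially the same route as the paper: the paper's proof is exactly this face-boundary walk in $\DT_{\leq e}$ via repeated calls to \texttt{clockwiseNextDelaunayEdge} (skipping edges longer than $|pq|$), deciding membership in $\EMST(S)$ by whether the walk re-encounters $e$ from $q$ or from $p$, which is your bridge/same-face-on-both-sides criterion in mirrored orientation. Your arc-counting argument for the skipped long edges is just a more explicit version of the paper's terser observation that the walk generates each Delaunay edge at most twice, each by one $O(n)$-time call, giving the same $O(n^2)$ time and $O(1)$ words of workspace.
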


\begin{proof}
As explained above, we must determine
whether there is a path between $p$ and $q$ in $\DT_{<e}$.
Since $\DT_{\leq e}$ is a plane graph, if $p$ to $q$ are connected 
in $\DT_{<e}$, then there is a path 
that forms a face in $\DT_{\leq e}$ with $e$ on its boundary.  
Thus, we can check for the existence of such a path
by walking from $p$ along the boundary of the face of $\DT_{< e}$ 
that is intersected by $e$
and checking whether we can encounter $q$. 

In the first step, we find the clockwise next edge $pr$ from $e$ that is 
incident to $p$ and has length at most $|e|$. This can be done
by repeated invocations of 
\texttt{clockwiseNextDelaunayEdge}. 
In the next step, we find the clockwise next edge from $rp$ that is 
incident to $r$ and that has length at most $|e|$,
again by repeated calls to \texttt{clockwiseNextDelaunayEdge}.
The walk continues, for at most $2n$ steps, until we encounter 
the edge $e$ again.
This can happen in two ways: (i) we see $e$ while enumerating the
incident edges of $q$. Then, $p$ and $q$
are in the same component of $\DT_{<e}$, and hence $e$ does not belong to
$\EMST(S)$; or (ii) we see $e$ while again enumerating incident edges of 
$p$. In this case, it follows that $e$ connects two connected components
of $\DT_{<e}$, and $e$ belongs to $\EMST(S)$.\footnote{Note that
the walk may return to $p$ first even though $p$ and $q$ lie in the
same component of $\DT_{<e}$. Thus, it is crucial that we continue the
walk until $e$ is encountered for a second time from $p$.} 
See Figure~\ref{fig:EMST-Edge} for an illustration.

\begin{figure}
  \centering
    \includegraphics{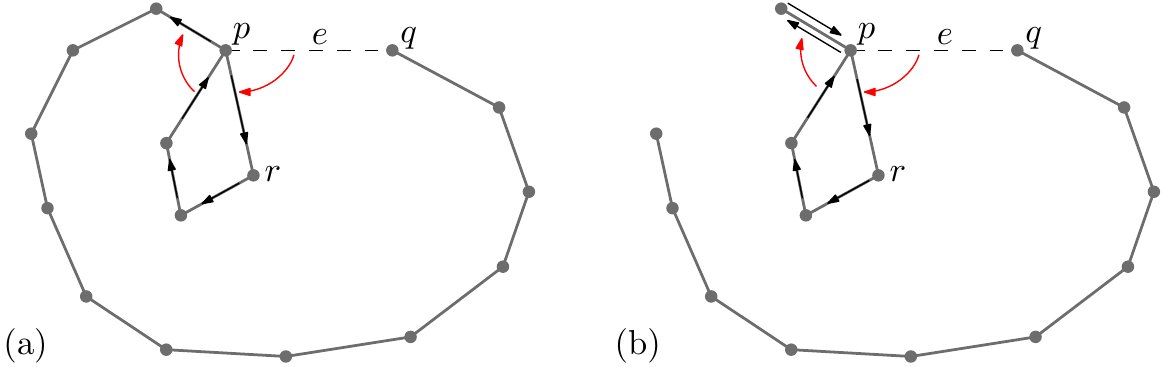}
\caption{Illustration of the algorithm for determining if the edge 
$e=pq$ is part of the EMST.
The walk starts from the edge $pr$. After 5 steps, it returns
to $p$, but it does not see $e$. Thus the walk continues until it 
encounters $e$ as an edge incident to $q$. Thus, 
$e \not \in \EMST(S)$. (b) Slightly modified instance in which $p$
and $q$ are in different components. The walk starts from $pr$,
and after 7 steps it encounters $e$ as an edge of $p$. 
Thus, $e\in \EMST(S)$.}
\label{fig:EMST-Edge}
\end{figure}

This procedure generates a subset of the edges in $\DT(S)$, and
each edge is generated at most twice,
by a call to
\texttt{clockwiseNextDelaunayEdge} in $O(n)$ time. 
Thus, the total running time is $O(n^2)$. The space bound 
is immediate.
\end{proof}

\begin{theorem}\label{thm:cwsEMST}
Given a set $S$ of $n$ sites in the plane, we can 
output all edges of $\EMST(S)$ in $O(n^3)$ 
time using $O(1)$ cells of workspace.
\end{theorem}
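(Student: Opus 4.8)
The plan is to combine the constant-workspace enumeration of Delaunay edges with the per-edge membership test of Lemma~\ref{lem:checkEMSTedge}. Since $\EMST(S) \subseteq \DT(S)$ and $\DT(S)$ has only $O(n)$ edges, it suffices to examine each Delaunay edge once and decide whether it belongs to $\EMST(S)$.

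First I would enumerate the edges of $\DT(S)$ one at a time, using the dual of the Voronoi construction from Theorem~\ref{thm:cwsVoronoi}: for each site $p_i$ in index order, I start from some initial incident Delaunay edge and repeatedly call \texttt{clockwiseNextDelaunayEdge$(p_i, \cdot)$} to walk around $p_i$, generating all Delaunay edges incident to $p_i$ in $O(n)$ time per edge. This visits each edge $e = p_i p_j$ twice, once from each endpoint; to process each edge exactly once, I act on $e$ only when $i < j$. Crucially, the entire state of this enumeration---the current site, the current edge, and the starting edge used to detect a complete revolution around $p_i$---fits in $O(1)$ cells, so the enumeration can be suspended after reporting an edge and resumed afterwards with no extra memory.

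Then, for each candidate edge $e$ with $i < j$, I invoke Lemma~\ref{lem:checkEMSTedge} to test in $O(n^2)$ time and $O(1)$ workspace whether $e \in \EMST(S)$, and I write $e$ to the output stream exactly when the test succeeds. The test uses its own $O(1)$ scratch space, which composes with the $O(1)$ enumeration state to keep the total workspace at $O(1)$. For the running time, the enumeration touches $O(n)$ Delaunay edges and spends $O(n^2)$ total walking between them; the dominant cost is the $O(n)$ invocations of Lemma~\ref{lem:checkEMSTedge}, each taking $O(n^2)$, for $O(n^3)$ overall. Correctness follows because $\EMST(S)$ is precisely the set of Delaunay edges that survive the bottleneck test of Lemma~\ref{lem:checkEMSTedge}, and the $i<j$ rule guarantees that each such edge is reported once.

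The step I expect to be the main obstacle is the interleaving of the two constant-workspace routines: I must verify that the Delaunay-edge enumeration is genuinely resumable from $O(1)$ state and that running Lemma~\ref{lem:checkEMSTedge} between two consecutive enumeration steps does not disturb that state, so that duplicate-free output is achieved without ever storing the set of already-seen edges. Everything else---the edge count of $\DT(S)$, the per-edge cost, and the product bound---is routine.
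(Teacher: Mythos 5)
Your proposal is correct and matches the paper's proof: both enumerate the $O(n)$ edges of $\DT(S)$ via the constant-workspace dual of Theorem~\ref{thm:cwsVoronoi} (with the $i<j$ rule for de-duplication), pause the enumeration to test each edge with Lemma~\ref{lem:checkEMSTedge} in $O(n^2)$ time, and resume, which is sound precisely because, as you verified, the enumeration state fits in $O(1)$ cells. The interleaving concern you flagged is the same point the paper addresses by noting that pausing and resuming is possible since the subroutine uses only constant space.
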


\begin{proof}
The algorithm computes the edges of $\DT(S)$ using the adaptation 
of the algorithm in the proof of Theorem~\ref{thm:cwsVoronoi}. 
Every time we detect a new Delaunay edge $e$, we pause the 
computation of $\DT(S)$ and, using Lemma~\ref{lem:checkEMSTedge}, we 
determine if $e$ is in $\EMST(S)$. If so, we output $e$ (do nothing 
otherwise). In either case we resume with the computation of 
$\DT(S)$ (pausing and resuming can be done because this subroutine 
only uses $O(s)$ space). Since
$\DT(S)$ has $O(n)$ edges, and since it takes
$O(n^2)$ time to decide membership in 
$\EMST(S)$, the total time to determine the edges of
$\EMST(S)$ is $O(n^3)$. Furthermore, by 
Theorem~\ref{thm:cwsVoronoi}, the overhead for computing
$\DT(S)$ is $O(n^2)$, which is negligible compared to the remainder of the
algorithm. The space requirement is immediate from 
Theorem~\ref{thm:cwsVoronoi} and 
Lemma~\ref{lem:checkEMSTedge}.  
\end{proof}

\begin{problem}
Can $\EMST(S)$ be found in $o(n^3)$ time and constant workspace?
\end{problem}

\paragraph*{Time-Space Trade-Offs.}
The running time of the algorithm is dominated by the $O(n)$ 
calls to Lemma~\ref{lem:checkEMSTedge}. Thus, the time-space trade-off
from Theorem~\ref{thm:tradeoffVoronoi} does 
not immediately extend for computing $\EMST(S)$. 
Recently, Banyassady~\etal~\cite{BanyassadyBaMu18} 
revisited the problem and provided a time-space trade-off, 
building on the ideas in Theorem~\ref{thm:cwsEMST}.
Their algorithm computes $\EMST(S)$
in $O(n^3 \log s/ s^2)$ time, provided that $s$ cells of workspace are
available. 
It uses the workspace in two different ways: akin to 
Lemma~\ref{lem:batchVoronoiEdge}, we check $s$ edges in parallel 
for membership in $\EMST(S)$. 
Further, Banyassady~\etal~~\cite{BanyassadyBaMu18} introduce
\emph{$s$-nets}, a compact 
representation of planar graphs.
Using $s$-nets, one can speed up
Kruskal's MST algorithm on $S$ by making better use of the  
limited workspace.\footnote{Although the spirit of the algorithm is 
the same, in~\cite{BanyassadyBaMu18}, the walks are 
performed in the \emph{relative neighborhood graph} of $S$ instead 
of $\DT(S)$. This is critical, 
since $\DT(S)$ is not of bounded degree.}
The $s$-net structure seems to be of independent interest, 
as it provides a compact way to represent planar 
graphs that could be exploited by other algorithms that deal
with such graphs.

\section{Triangulations, Partitions, and 
Shortest Paths in Polygons}

Let $P$ be a simple planar polygon with $n$ vertices. 
When dealing with $P$, it is often useful to first compute a
\emph{triangulation} for $P$. We remind the reader of the terminology: 
a \emph{diagonal} is 
a line segment between two vertices of $P$ that 
goes through the interior of $P$.  A triangulation 
is a maximal set of diagonals that do not 
cross each other.
Any triangulation of $P$ contains exactly $n-3$ diagonals,
and---famously and perhaps notoriously---a 
triangulation of $P$ can be found in linear time when
linear workspace is available~\cite{Chazelle91}.

In order to perform divide-and-conquer algorithms on $P$, we often would
like to have a \emph{balanced partition} of $P$:
a \emph{chord} is a line segment with endpoints 
on the boundary of $P$ that goes through the interior of $P$. 
Given $s \in \{1, \dots, n\}$, a balanced partition of $P$ is a 
set of $s$ mutually non-crossing \emph{chords} that 
partition $P$ into subpolygons with 
$O(n/s)$ vertices each. Ideally, the 
chords should be diagonals,  
but other choices, such as vertical segments, are 
acceptable as well. With linear workspace at our disposal, 
for any given $s$, a balanced 
partition of $P$ can be obtained in linear time by first 
triangulating $P$ and then traversing the dual graph of 
the triangulation while greedily selecting appropriate 
diagonals to serve as the chords of the partition.

A third classic problem on $P$ is
shortest path computation: given two points $u$ and $v$ inside $P$, 
find the unique (geodesic) shortest path from $u$ to $v$ that stays in 
$P$. 
The standard algorithm for this problem triangulates $P$ and 
then walks in the dual graph of the triangulation from the 
triangle containing $u$ to the triangle containing $v$.
During this walk, the algorithm maintains a simple
\emph{funnel} data structure and outputs the segments of the
shortest path. This requires
linear time and linear space (see, for example, Mitchell's 
survey~\cite{m-spn-17}).

Thus, in traditional computational geometry, it
appears that the most fundamental of the
three problems is polygon triangulation. Once a 
subroutine for triangulating $P$ is at hand, the 
other two problems can be solved
rather easily, with additional linear overhead. However, the
study of the limited workspace model has revealed a more intricate 
web of relationships between these problems 
that we will now explain.

\subsection{First Results in the Limited Workspace Model}
Asano~\etal~\cite{AsanoMuWa11} showed a way of navigating a 
\emph{trapezoidation} of $P$ with $O(1)$ cells of workspace,
so that the adjacent trapezoids for any given trapezoid can be
found in $O(n)$ time (assuming that the vertices of $P$ are in general
position). They also gave a simple algorithm to output the path 
between two vertices of a tree that needs linear time and 
$O(1)$ cells of workspace~\cite{AsanoMuWa11}. These two results together 
lead to a constant workspace algorithm
for the shortest path problem in $P$ that requires $O(n^2)$ time
in theory and turns out to be quite efficient in 
practice~\cite{CleveMu17}. 
Furthermore, Asano~\etal provide two alternative
constant workspace
algorithms for shortest paths in polygons that are based on 
constrained Delaunay triangulations~\cite{AsanoMuWa11} and 
ray shooting~\cite{AsanoMuRoWa11}, respectively. The former
algorithm requires $O(n^3)$ time, the latter algorithm requires
$O(n^2)$ time. 
An experimental evaluation of these algorithms
was conducted by Cleve and Mulzer~\cite{CleveMu17}. It  showed that
the theoretical guarantees for the shortest path algorithms 
could also be observed in practice. 
\begin{problem}
How fast can we find shortest paths in polygonal domains
in the limited workspace model?
\end{problem}

\subsection{Space-efficient Reductions}
The initial study of Asano~\etal~\cite{AsanoMuWa11} 
was quickly followed by another
work of Asano~\etal~\cite{AsanoBuBuKoMuRoSc13} that showed how 
to enumerate all triangles in a triangulation of $P$
using $O(n^2)$ time and $O(1)$ cells of workspace. They also proved 
the following:
if one can compute a triangulation of $P$ in time $T(n, s)$ when $s$
cells of workspace are available, then one can use a bottom-up approach to 
find $O(s)$ diagonals that constitute a balanced partition of $P$ 
into subpolygons with $O(n/s)$ vertices, using  $T(n, s)+ O(n^2/s)$ 
time and $s$ cells of workspace. As they explain, such a balanced 
partition also leads to
a time-space trade-off for the shortest path problem in $P$: given the
diagonals of the partition, one can navigate through the 
subpolygons and compute a 
shortest path between any two vertices in $P$  in additional 
$O(n^2/s)$ time, using $s$ cells of workspace.
Thus, we have a reduction from shortest paths in polygons to 
polygon triangulation
that runs in $T(n,s) + O(n^2/s)$ time, using $s$ cells of 
workspace. At the time of Asano~\etal's work~\cite{AsanoBuBuKoMuRoSc13},
the best general bound for $T(n,s)$ was $T(n,s) = O(n^2)$, so that
their result at first seemed only useful in a regime where
$P$ can be preprocessed and the balanced partition can be stored
in the workspace for later use.

Aronov~\etal~\cite{AronovKoPrvReRo16} showed a connection between
shortest paths and polygon triangulation that goes
in the other direction.
Given a shortest path algorithm as a black-box, they use it 
to partition $P$ recursively into smaller pieces
that fit into the workspace. Then, each piece 
can be triangulated in the workspace using linear time and 
space~\cite{Chazelle91}. Even though it may happen that the recursion
runs out of space---in which case the algorithm needs to fall back
to a brute force triangulation method---it turns out that if 
there are at least
$s = \Omega(\log n)$ cells of workspace at our disposal, the recursion
always succeeds. Then,
the resulting running time $R(n)$ is of the 
form $R(n) = T(n, s) + R(cn)$, where $c < 1$ is a constant and 
$T(n, s)$ is the time needed to compute the shortest path between 
two points in a simple polygon with $n$ vertices, provided that $s$ cells
of workspace are available. 

Thus, by combining the results of Asano~\etal~\cite{AsanoBuBuKoMuRoSc13} 
and Aronov~\etal~\cite{AronovKoPrvReRo16}, one can see that all three 
problems are equivalent in the limited workspace model: 
a triangulation can be used to partition 
$P$ into balanced pieces~\cite{AsanoBuBuKoMuRoSc13}. Once we 
have the partition, 
we can compute the shortest path between any two 
points in $P$~\cite{AsanoBuBuKoMuRoSc13}. Finally, a 
shortest path subroutine 
allows us to find a triangulation of $P$ in essentially the same 
time and 
space~\cite{AronovKoPrvReRo16}. 
Thus, given a fast 
algorithm for any of the three problems, we can use the reductions 
to obtain algorithms for the other two problems 
that require essentially the same time and space (the transformation 
only adds an additive overhead of $O(n^2/s)$ time).

\subsection{Obtaining Genuine Trade-Offs}
Even though the work of Asano~\etal~\cite{AsanoBuBuKoMuRoSc13}
gave a time-space trade-off for shortest paths in polygons,
it did so at the expense of a preprocessing step that 
triangulates the polygon, a task for which they could only
claim a $O(n^2)$ time algorithm, even if $s = \omega(1)$ cells
of workspace were available. Thus, at SoCG 2014, Tetsuo Asano asked whether
there is a more direct time-space trade-off for the shortest
path problem that does not go through such a preprocessing step.

Soon after, Har-Peled~\cite{HarPeled16} answered this question in the 
affirmative. He described a randomized algorithm that
uses the \emph{violator spaces} framework~\cite{GaertnerMaRuSk08} 
and a new polygon decomposition technique~\cite{HarPeled14} 
in order to compute the 
shortest path between any two points in $P$ in expected
time $O(n^2/s+n\log s\log^4(n/s))$, provided that $s$ cells of workspace
are available. 
Har-Peled's result, combined with the reductions of Asano~\etal and 
Aronov~\etal, also gives algorithms for the other two problems that
run in expected time $O(n^2/s+n\log s\log^5(n/s))$, using
$s$ cells of workspace~\cite{AronovKoPrvReRo16}. 

Very recently, Oh and Ahn~\cite{OhAh17} showed that
a similar result could also be obtained with a deterministic
algorithm. They explained how to find the trapezoidal decomposition of 
$P$ in $O(n^2/s)$ time, using $s$ cells of workspace. This 
decomposition, together with the method of 
Asano~\etal~\cite{AsanoBuBuKoMuRoSc13}, gives a $O(n^2/s)$ time
deterministic algorithm for obtaining a balanced partition of $P$ 
with $s$ chords using $s$ cells of workspace. 
Again, the reductions yield algorithms with similar running
times for both computing a triangulation and shortest paths in $P$,
see Table~\ref{tab:summary}.

\subsection{Special Cases}
Special classes of polygons have also been studied.
Asano and Kirkpatrick~\cite{AsanoKi13} showed that a 
\emph{monotone} polygon
can be triangulated in $O(n \log_s n)$ time, provided that
$s$ cells of workspace are available. This is a 
remarkable trade-off, since at the $s=O(1)$ extreme, it decreases 
the memory required by a linear fraction while only increasing the 
running time by a logarithmic factor. Moreover, it also has 
a gradual improvement of the running time as the
available space increases that matches the best bounds when 
$s=\Theta(n)$. The algorithm by Asano and Kirkpatrick
proceeds through a reduction to the \emph{all nearest larger
neighbors} (ANLN) problem: given a sequence $a_1, a_2, \dots, a_n$ 
of $n$ real numbers, find for each $a_i$, $i \in \{1, \dots, n\}$,
the closest indices $j < i$ and $k > i$ with $a_j, a_k > a_i$.

\begin{problem}
Asano and Kirkpatrick showed that their algorithm is
optimal for $s = O(1)$. Can the time-space trade-off be improved?
\end{problem}

\section{Other Problems in Simple Polygons}

In addition to shortest paths and triangulations, also problems
concerning visibility, convex hulls, and common tangents for
simple polygons have been studied in the limited workspace model.

\subsection{Visibility}
Another problem that has received heavy attention
in the limited workspace model is \emph{visibility} in simple 
polygons. Visibility problems have played a major role in 
computational geometry for a long time, see~\cite{Ghosh07} for an 
overview. 
The simplest problem in this family is as follows:
given a simple polygon $P$, a point $p\in P$, and an integer $k \geq 0$, 
compute 
the set of points in $P$ that are \emph{$k$-visible} from $p$, i.e., 
the set 
of points $q\in P$ for which the connecting segment $pq$ 
properly intersects the boundary of $P$ at most $k$ times. This 
set is called the \emph{$k$-visible region} of $p$ in $P$.
Thus, if we interpret the polygon as the walls of a building, 
the $0$-visible region of $p$ is the set of points that $p$ 
can see directly, without seeing through the walls. 
If we consider $2$-visibility, we allow the segment to leave 
(and re-enter) $P$ once, and so on. 

For $k=0$, the $0$-visibility problem can be solved in linear time 
and space, using a classic algorithm~\cite{JoeSi87}. For larger 
values of $k$, the problem can be solved in quadratic 
time~\cite{bajuelos2012hybrid}.\footnote{Note that the notion of 
$k$-visibility is slightly different in~\cite{bajuelos2012hybrid}:
they consider all $k$-visible points in the plane and not just the 
points inside the polygon.}
Several time-space trade-offs are known for computing the 
$k$-visible region. 
For $0$-visibility, Barba~\etal~\cite{BarbaKoLaSi14} presented an algorithm
that requires $O(1)$ cells of workspace and that runs in 
$O(n \bar r)$ time, where $\bar r$ is the number of reflex
vertices of $P$ that appear in the $0$-visible 
region of $p$.
Their work also contains a
time-space trade-off for finding
the $0$-visible region. 
More precisely, they describe 
a recursive method based on
their constant workspace algorithm that runs in
$O(nr/2^s + n \log^2 r)$ deterministic time and in $O(nr/2^s + n \log r)$
randomized expected time. Here, $r$ is the number of
reflex vertices in $P$. Their algorithm uses $s$ cells of workspace,
where $s$ is allowed to range from $1$ to $\log r$~\cite{BarbaKoLaSi14}. 
Only slightly later, a superset of the 
authors~\cite{BarbaKoLaSaSi15} gave an
improved algorithm for the $0$-visibility problem that runs
faster for specific combinations of $r$, $s$, and $n$. In
fact, Barba~\etal~\cite{BarbaKoLaSaSi15} 
discovered a much more general method for obtaining 
time-space trade-offs for a wide class of geometric
algorithms that they call \emph{stack-based} algorithms.
See below for a more detailed description of this method.

\begin{problem}
For $s = \log n$, the visibility region can be computed 
in $O(n\log^2 n)$ deterministic and $O(n \log n)$ expected time. 
For $s = n$, it takes $O(n)$ deterministic time. What happens for 
$s = \log n, \dots, n$?
Partial answers are known. For example, using the compressed stack 
framework mentioned below, we can achieve $O(n)$ time using 
$n^{\varepsilon}$ cells of workspace, for any 
fixed $\eps > 0$. What is the smallest value of $s$ for which we can compute the visibility region in linear time?
\end{problem}
For the general case of $k \in \{1, \dots, n\}$, 
Bahoo~\etal~\cite{BahooBaBoDuMu17}
provided a time-space trade-off that computes the $k$-visible region of  
$p$ in a polygonal region $P$ in $O\big((k+c)n/s + n\log s\big)$ time 
using $s$ cells of 
workspace, where $s$ may range from $1$ to $n$. Here, $1 \leq c \leq n$ 
is the number of ``critical'' vertices of
$P$, i.e., vertices, where the $k$-visible region may 
change.\footnote{The actual trade-off is a 
bit more complicated, and we chose a simpler bound 
for the sake of presentation.} This algorithm 
makes use of known time-space trade-offs for the
$k$-selection problem~\cite{ChanMuRa14} and requires a
careful analysis of the combinatorics of the $k$-visible 
region of $p$ in $P$.

Another notion of visibility was considered by 
Abrahamsen~\cite{Abrahamsen13}.  
Given a simple polygon $P$ and an edge $e$ of
$P$, the \emph{weak visibility region} of
$e$ in $P$ is the set of all the points 
inside $P$ that are visible from at least
one point on $e$.  Abrahamsen developed 
constant workspace algorithms
for edge-to-edge visibility. This leads to 
an $O(nm)$ time algorithm for finding the 
weak visibility region inside a given simple
polygon $P$ for an edge $e$ in constant
workspace, where $m$ denotes the size of the 
output. The result also gives an $O(n^2)$ time and constant
workspace algorithm for computing a minimum-link-path 
between two points in a simple polygon with $n$
vertices.

\subsection{The Compressed Stack Framework}
Barba~\etal~\cite{BarbaKoLaSaSi15}
provided a general method for \emph{stack-based}
algorithms in the limited workspace model.
Intuitively, a deterministic incremental algorithm is stack-based if its 
main data structure  takes the form of a stack. 
In addition to computing $0$-visibility regions,
classic examples from this category include the
algorithms for computing the convex hull of a simple
polygon by Lee~\cite{Lee83} or for triangulating a monotone 
polygon by Garey~\etal~\cite{GareyJoPrTa78}. Other applications in 
graphs of bounded treewidth are also known~\cite{banerjee2015timespace}.

The general trade-off is obtained by using a 
\emph{compressed stack} that explicitly stores
only a subset of the stack that is needed during the
computation and that recomputes the remaining
parts of the stack as they are needed.
Some delicate work goes into balancing the space
required for the partial stack and the time needed for 
reconstructing the other parts.
The upshot of applying this technique is as 
follows: given a stack-based algorithm that
on input size $n$ runs in $O(n)$ time and
uses a stack with $O(n)$ cells, one 
can obtain an 
algorithm that uses $s$ cells
of workspace and runs in $O(n^2\log n/2^s)$ 
time for $s = o(\log n)$ and 
in $n^{1+1/\log s}$ time for 
$s \geq \log n$.\footnote{Again, as above, the 
actual trade-off is more nuanced, 
but we simplified the bound to make it more digestible for the
casual reader. More details can be found in the original 
paper~\cite{BarbaKoLaSaSi15}.}
An experimental evaluation of the framework was conducted by
Baffier~\etal~\cite{BaffierDiKo18}.

\begin{problem}
Can we modify the compressed stack framework to 
compress other structures (e.g., queues, 
dequeues, or trees)? If so, what additional applications 
follow from these techniques?
\end{problem}

\subsection{Common Tangents}
The problem of finding the common tangents of two disjoint polygons can
also be solved with 
a limited amount of workspace.
Abrahamsen~\cite{Abrahamsen15} showed that one can find the
\emph{separating tangents} of two polygons 
with $n$ vertices and disjoint convex hulls  
in $O(n)$ time and $O(1)$ cells of workspace (if the convex hulls 
overlap, the algorithm
reports that a separating common tangent does not exist.)
His algorithm is stunningly simple, consisting essentially of a single
for-loop, but it requires a subtle 
analysis. 
In follow-up work, Abrahamsen and Walczack~\cite{AbrahamsenWa16} changed 
two lines in Abrahamsen's  algorithm 
to show that the problem of finding the \emph{outer
tangents} of two disjoint polygons with $n$ vertices can also be 
solved with
$O(n)$ time and $O(1)$ cells of workspace.
Combining these two results, one can decide in the same time and 
space complexity if the convex hulls of the given polygons are disjoint, 
overlapping, or nested.
\begin{problem}
Can we find the outer common tangents of two non-disjoint polygons in 
linear time using $O(1)$ words of workspace?
\end{problem}

\bibliographystyle{abbrv}
\bibliography{cws} 
\end{document}